\theoremstyle{thmstyleone}
\newtheorem{theorem}{Theorem}
\newtheorem{lemma}[theorem]{Lemma}
\theoremstyle{thmstyletwo}
\newtheorem{remark}{Remark}
\theoremstyle{thmstylethree}
\newtheorem{definition}{Definition}
\begin{document}

\title[An Incremental Framework for Topological Dialogue Semantics]{An Incremental Framework for Topological Dialogue Semantics: Efficient Reasoning in Discrete Spaces}

\author*[1]{\fnm{Andreu} \sur{Ballús Santacana}}\email{andreu.ballus@uab.cat}

\affil*[1]{\orgdiv{Department of Philosophy}, \orgname{Universitat Autònoma de Barcelona},  \country{Spain}}

\abstract{\noindent We present a tractable, incremental framework for topological dialogue semantics based on finite, discrete semantic spaces. Building on the intuition that utterances correspond to open sets and their combinatorial relations form a simplicial complex (the \emph{dialogue nerve}), we give a rigorous foundation, a provably correct incremental algorithm for nerve updates, and a reference implementation in the Wolfram Language. The framework supports negative nerve computation (inconsistency tracking), consequence extraction, and a transparent, set-theoretic ranking of entailments. We clarify which combinatorial properties hold in the discrete case, provide motivating examples, and outline limitations and prospects for richer logical and categorical extensions.}

\keywords{Dialogue semantics, Topological logic, Simplicial complex, Incremental algorithms, Consistency, Logical consequence, Wolfram Language}

\maketitle

\section{Introduction}\label{sec1}

Automated dialogue systems increasingly mediate information exchange in both technical and social contexts. However, the formal semantic modeling of dialogue—capturing concepts such as consistency, logical consequence, and the interplay of meaning across dialogue turns—remains a persistent challenge. Foundational work in pragmatics established the importance of context and its dynamic evolution, often described metaphorically as "scorekeeping" in a shared language game \citep{lewis1979scorekeeping, stalnaker1999context}. Our work aims to provide a formal and computationally grounded method for tracking this "score."

Topological and combinatorial approaches, in which utterances are interpreted as open sets in a semantic space and their interaction encoded as a simplicial complex (the \emph{dialogue nerve}), offer a powerful unifying framework. Despite this, prior theoretical developments have often lacked concrete, efficient algorithms for updating and querying these structures as dialogues progress.

This paper presents a \emph{tractable, incremental framework for topological dialogue semantics} based on finite, discrete semantic spaces. We provide a formal foundation, a rigorously proved incremental nerve-update algorithm, and a working implementation in the Wolfram Language.\footnote{The full reference implementation in the Wolfram Language is available upon request.} Additionally, we extend the basic framework to include \emph{negative nerves} (capturing inconsistency), global consequence extraction, and a simple, transparent ranking of entailments.

We give a precise account of when combinatorial properties (such as the ``flag'' condition) hold, highlight practical tractability through worked examples, and offer a realistic outlook on extensions to modal and categorical logics. This work thus bridges the gap between abstract, expressive theories of dialogue semantics and practical, efficient reasoning components.

Beyond its theoretical contribution, the proposed framework offers practical value for computational applications. In particular, it can serve as the backbone for dialogue systems and conversational agents that require transparent, incremental consistency checking, and efficient extraction of logical consequences. The nerve-based structure further enables explainable AI, as minimal inconsistency witnesses and semantic rankings can be directly surfaced to users or system designers, facilitating debugging and collaborative interaction in both natural language processing and multi-agent environments. A companion philosophical article, providing epistemic and conceptual context for the present technical framework, is currently under review \citep{ballus2025geometry}.

\section{Related Work}\label{sec:relatedwork}

This paper builds upon a rich tradition of topological and logical approaches to semantics, belief change, and dialogue. Prior foundational work by Baltag, Bezhanishvili, and Smets has explored topological semantics for knowledge and belief, framing belief as the closure of the interior in extremally disconnected spaces, and extending the KD45 logic to these settings \citep{baltag2018topological}. These works also develop dynamic belief update semantics via subspace restriction, modeling epistemic and doxastic change \citep{baltag2013topology}.

Dynamic Epistemic Logic (DEL) has been a key source of techniques for modeling informational change in logic. Notably, dynamic belief revision via plausibility models \citep{baltag2006logic, baltag2008probabilistic} and conditional doxastic models provides frameworks for updating shared beliefs in multi-agent settings—structures whose logical structure closely parallels the combinatorics of simplicial models.

Dialogue semantics has been treated incrementally through grammars like Dynamic Syntax \citep{purver2011feedback} and within broader theoretical frameworks that emphasize the interactive nature of meaning \citep{ginzburg2012interactive}. However, few works have given provably correct, efficient algorithms for computing logical consequence or inconsistency through combinatorial topological methods. Projects like IncReD \citep{incred2020} approach human reasoning experimentally but do not offer constructive update procedures.

The current framework differs from prior logics of belief change such as AGM theory \citep{alchourron1985logic} or Veltman's update semantics \citep{veltman1996defaults} in its use of explicitly finite combinatorial models. Unlike earlier belief logics which presuppose infinite modal structures, this work is grounded in finite semantic spaces, enabling practical algorithms. The "dialogue nerve" construction offers a new way to track consistent utterance sets that resonates with simplicial approaches to distributed epistemics \citep{attiya2004distributed} but in a fully propositional, discrete setting.

\medskip

\noindent
Recent work in JoLLI has explored a variety of formal approaches to dynamic semantics and dialogue, further highlighting the importance of incremental and context-sensitive reasoning. For instance, Ciardelli and Roelofsen~\citep{ciardelli2021inquisitive} develop inquisitive semantics as a new logical foundation for meaning, while Asher and Paul~\citep{asher2020typology} propose a typology for dynamic semantic frameworks relevant to dialogue. Kracht~\citep{kracht2022dialogue} investigates logical models of dialogue games and context change, closely related in spirit to the present approach. Our framework complements and extends these lines by providing an efficient combinatorial and topological foundation for dialogue semantics and consequence extraction.

\section{Preliminaries and Formal Framework}\label{sec3}

\subsection{Propositional Dialogue Language and Semantics}

Let $\mathcal{L}$ be a finite propositional language generated from a set of atomic propositions $\mathcal{P}$ using the connectives $\neg, \land, \lor$.

\subsection{Topological Model (Finite Discrete Space)}

Let $X$ be a finite non-empty set (the \emph{semantic space}), and let $\mathcal{O}(X)$ denote the set of all subsets of $X$ (the discrete topology, so every subset is open). A \emph{semantic atlas} is a pair $(X, \nu)$, where $\nu \colon \mathcal{L} \to \mathcal{O}(X)$ is an interpretation function satisfying:
\begin{align*}
    \nu(\phi \land \psi) &= \nu(\phi) \cap \nu(\psi), \\
    \nu(\phi \lor \psi) &= \nu(\phi) \cup \nu(\psi), \\
    \nu(\neg \phi) &= X \setminus \nu(\phi),
\end{align*}
for all $\phi, \psi \in \mathcal{L}$.

\subsection{Consistency and Logical Consequence}

A point $x \in X$ \emph{satisfies} a formula $\phi$, written $x \models \phi$, if and only if $x \in \nu(\phi)$.
A set of formulas $\Phi \subseteq \mathcal{L}$ is \emph{consistent} if $\bigcap_{\phi \in \Phi} \nu(\phi) \neq \emptyset$.
We say that $\Phi$ \emph{entails} $\psi$ (written $\Phi \models \psi$) if $\bigcap_{\phi \in \Phi} \nu(\phi) \subseteq \nu(\psi)$.

\subsection{Dialogue as Sequence of Utterances}

A finite dialogue is a sequence $U = (U_1, U_2, \dots, U_n)$ of formulas, each interpreted as an open set $S_i = \nu(U_i) \subseteq X$.
The dialogue state is thus encoded as a finite family $\mathcal{S} = \{S_1, \dots, S_n\}$ of subsets of $X$.

\section{Dialogue Nerve and Negative Nerve}\label{sec4}

\subsection{The Dialogue Nerve}

Given a dialogue sequence $U = (U_1, U_2, \ldots, U_n)$, let each $U_i$ be interpreted as an open set $S_i = \nu(U_i) \subseteq X$. We collect these as a family $\mathcal{S} = \{ S_1, \dots, S_n \}$. This family is a hypergraph, and its nerve is a classic construction in combinatorial topology \citep{berge1973graphs}.

\begin{definition}[Dialogue Nerve]
The \emph{dialogue nerve} $\mathcal{N}(\mathcal{S})$ is the simplicial complex whose simplices are the index subsets $\sigma \subseteq \{1, \ldots, n\}$ such that $\bigcap_{i \in \sigma} S_i \neq \emptyset$. That is,
\[
    \mathcal{N}(\mathcal{S}) = \left\{\, \sigma \subseteq \{1, \ldots, n\} \;\mid\; \bigcap_{i \in \sigma} S_i \neq \emptyset \,\right\}.
\]
\end{definition}

Intuitively, $\mathcal{N}(\mathcal{S})$ records all groups of utterances that are \emph{jointly consistent} in the model.

\subsection{The Negative Nerve}

\begin{definition}[Negative Nerve]
The \emph{negative nerve} $\mathcal{N}^{-}(\mathcal{S})$ is the simplicial complex of all index subsets $\tau \subseteq \{1, \ldots, n\}$ such that $\bigcap_{i \in \tau} S_i = \emptyset$:
\[
    \mathcal{N}^{-}(\mathcal{S}) = \left\{\, \tau \subseteq \{1, \ldots, n\} \;\mid\; \bigcap_{i \in \tau} S_i = \emptyset \,\right\}.
\]
\end{definition}

Thus, $\mathcal{N}^{-}(\mathcal{S})$ catalogs all minimal inconsistent groups of utterances.

\subsection{Combinatorial Properties and Flag Condition}

It is immediate from the definition that $\mathcal{N}(\mathcal{S})$ is a \emph{downward-closed} family of subsets (i.e., a simplicial complex): if $\sigma \in \mathcal{N}(\mathcal{S})$ and $\tau \subseteq \sigma$, then $\tau \in \mathcal{N}(\mathcal{S})$.

\begin{lemma}[Downward Closure]
Let $\mathcal{N}(\mathcal{S})$ be the dialogue nerve for a family of sets. If $\sigma \in \mathcal{N}(\mathcal{S})$ and $\tau \subseteq \sigma$, then $\tau \in \mathcal{N}(\mathcal{S})$.
\end{lemma}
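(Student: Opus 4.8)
The plan is to prove the downward-closure property directly from the definition of the dialogue nerve, using the elementary fact that intersecting over a larger index set cannot produce a smaller (in fact it produces a superset) intersection when we pass to a subset of indices. The whole statement reduces to a monotonicity observation about finite intersections of sets.

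First I would fix $\sigma \in \mathcal{N}(\mathcal{S})$ and an arbitrary subset $\tau \subseteq \sigma$. By the definition of the dialogue nerve, membership $\sigma \in \mathcal{N}(\mathcal{S})$ means precisely that $\bigcap_{i \in \sigma} S_i \neq \emptyset$; my goal is to show $\bigcap_{i \in \tau} S_i \neq \emptyset$, which is exactly the condition for $\tau \in \mathcal{N}(\mathcal{S})$. The key step is the inclusion
\[
    \bigcap_{i \in \sigma} S_i \;\subseteq\; \bigcap_{i \in \tau} S_i,
\]
valid whenever $\tau \subseteq \sigma$: any point lying in every $S_i$ for $i \in \sigma$ lies, a fortiori, in every $S_i$ for the smaller index collection $i \in \tau$.

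From this inclusion the conclusion is immediate. Since $\bigcap_{i \in \sigma} S_i$ is nonempty, pick a witness $x \in \bigcap_{i \in \sigma} S_i$; then $x \in \bigcap_{i \in \tau} S_i$ by the inclusion above, so $\bigcap_{i \in \tau} S_i \neq \emptyset$ and hence $\tau \in \mathcal{N}(\mathcal{S})$. I would also note the boundary case $\tau = \emptyset$: the empty intersection is conventionally taken to be all of $X$, which is nonempty since $X$ is a finite \emph{non-empty} set, so the empty simplex belongs to $\mathcal{N}(\mathcal{S})$ and the claim holds vacuously there as well.

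I do not anticipate a genuine obstacle here: the result is a one-line consequence of set-intersection monotonicity, and the only thing worth being careful about is the degenerate empty-index convention, which the non-emptiness of $X$ handles cleanly. The main "work," such as it is, lies in stating the witness argument precisely rather than in overcoming any real difficulty.
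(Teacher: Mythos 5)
Your proof is correct and follows essentially the same argument as the paper's: both rest on the monotonicity fact that $\bigcap_{i \in \sigma} S_i \subseteq \bigcap_{i \in \tau} S_i$ when $\tau \subseteq \sigma$, so nonemptiness passes down to subsets. Your explicit handling of the $\tau = \emptyset$ convention is a minor addition the paper leaves implicit, but it does not change the substance of the argument.
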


\begin{proof}
If $\bigcap_{i \in \sigma} S_i \neq \emptyset$ and $\tau \subseteq \sigma$, then $\bigcap_{i \in \tau} S_i$ contains $\bigcap_{i \in \sigma} S_i$ and is therefore also nonempty.
\end{proof}

\begin{remark}[On the Flag Property]
It is \emph{not} true in general that, if every pair $\{i,j\} \subseteq \sigma$ satisfies $S_i \cap S_j \neq \emptyset$, then $\bigcap_{i \in \sigma} S_i \neq \emptyset$. A counterexample is given by three sets $S_1 = \{1,2\}$, $S_2 = \{2,3\}$, $S_3 = \{1,3\}$ in $X = \{1,2,3\}$, where all pairwise intersections are nonempty, but $S_1 \cap S_2 \cap S_3 = \emptyset$.
\end{remark}

This demonstrates that $\mathcal{N}(\mathcal{S})$ may fail to be a \emph{flag complex} in the sense of combinatorial topology, unless additional convexity or Helly-type conditions are imposed \citep{danzer1963helly}.

\subsection{Summary}

The dialogue nerve captures all consistent combinations of utterances, while the negative nerve tracks their inconsistent combinations. These combinatorial structures provide the foundation for efficient incremental update algorithms, which we develop in the next section.

\section{Incremental Nerve Construction and Algorithm}\label{sec5}

\subsection{Data Structures and Notation}

For efficient implementation, we represent each $S_i \subseteq X$ as a \emph{bitset} (or Boolean vector) indexed by the elements of $X$. The nerve $\mathcal{N}(\mathcal{S})$ is stored as a set of subsets of $\{1, \ldots, n\}$ (for example, as a hash set of frozensets). The use of bitsets for manipulating subsets is a standard, highly efficient technique in combinatorial algorithms \citep{knuth2011art}.

Let $S_{\sigma} := \bigcap_{i \in \sigma} S_i$ for any $\sigma \subseteq \{1, \ldots, n\}$.

\subsection{Incremental Update Algorithm}

We describe an algorithm that, given the nerve for $\mathcal{S} = \{S_1, \ldots, S_n\}$, efficiently computes the nerve for $\mathcal{S}' = \mathcal{S} \cup \{S_{n+1}\}$. The key idea is to extend each existing simplex $\sigma$ by adding $n+1$ and checking consistency.

\begin{algorithm}
\caption{Incremental Nerve Update}\label{alg:incremental}
\begin{algorithmic}[1]
\Require Existing nerve $\mathcal{N}$, new set $S_{n+1}$.
\Ensure Updated nerve $\mathcal{N}'$ for $\mathcal{S}' = \mathcal{S} \cup \{S_{n+1}\}$.
\State $\mathcal{N}_{\mathrm{ext}} \Leftarrow \emptyset$
\ForAll{$\sigma \in \mathcal{N}$}
    \State $S_{\sigma} \Leftarrow \bigcap_{i \in \sigma} S_i$
    \If{$S_{\sigma} \cap S_{n+1} \neq \emptyset$}
        \State $\sigma' \Leftarrow \sigma \cup \{n+1\}$
        \State $\mathcal{N}_{\mathrm{ext}} \Leftarrow \mathcal{N}_{\mathrm{ext}} \cup \{\sigma'\}$
    \EndIf
\EndFor
\State $\mathcal{N}' \Leftarrow \mathcal{N} \cup \mathcal{N}_{\mathrm{ext}} \cup \{\{n+1\}\}$
\State \Return $\mathcal{N}'$
\end{algorithmic}
\end{algorithm}

\subsection{Correctness of the Incremental Algorithm}

\begin{theorem}[Correctness]
Let $\mathcal{N}$ be the nerve for $\mathcal{S} = \{S_1, \ldots, S_n\}$. Let $\mathcal{N}'$ be the output of Algorithm~\ref{alg:incremental} when $S_{n+1}$ is added. Then $\mathcal{N}'$ is precisely the nerve for $\mathcal{S}' = \mathcal{S} \cup \{S_{n+1}\}$.
\end{theorem}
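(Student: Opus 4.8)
The plan is to prove set equality $\mathcal{N}' = \mathcal{N}(\mathcal{S}')$ by double inclusion, partitioning the simplices of the target nerve $\mathcal{N}(\mathcal{S}')$ according to whether they contain the new index $n+1$. Every simplex $\rho \in \mathcal{N}(\mathcal{S}')$ is a subset of $\{1,\dots,n+1\}$ with $\bigcap_{i\in\rho}S_i\neq\emptyset$, and the natural case split is $n+1\notin\rho$ versus $n+1\in\rho$. I would match these two cases against the three pieces the algorithm assembles, namely $\mathcal{N}$, $\mathcal{N}_{\mathrm{ext}}$, and the singleton $\{\{n+1\}\}$.

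First I would treat the inclusion $\mathcal{N}'\subseteq\mathcal{N}(\mathcal{S}')$. For $\sigma\in\mathcal{N}$ we have $\bigcap_{i\in\sigma}S_i\neq\emptyset$ in the family $\mathcal{S}$, and since adding $S_{n+1}$ does not change any $S_i$ for $i\le n$, the same intersection is nonempty in $\mathcal{S}'$, so $\sigma\in\mathcal{N}(\mathcal{S}')$. For $\sigma'\in\mathcal{N}_{\mathrm{ext}}$, the algorithm only added $\sigma'=\sigma\cup\{n+1\}$ when $S_\sigma\cap S_{n+1}\neq\emptyset$; since $S_\sigma\cap S_{n+1}=\bigcap_{i\in\sigma'}S_i$, this is exactly the defining condition for $\sigma'\in\mathcal{N}(\mathcal{S}')$. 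The singleton $\{n+1\}$ satisfies $\bigcap_{i\in\{n+1\}}S_i=S_{n+1}$, which I would note is nonempty (assuming, as the framework implicitly does, that each utterance is interpreted by a nonempty open set; if $S_{n+1}=\emptyset$ the singleton should be excluded, and I would flag this as a hypothesis to make explicit).

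Next I would prove the reverse inclusion $\mathcal{N}(\mathcal{S}')\subseteq\mathcal{N}'$. Take any $\rho\in\mathcal{N}(\mathcal{S}')$. If $n+1\notin\rho$, then $\rho\subseteq\{1,\dots,n\}$ and $\bigcap_{i\in\rho}S_i\neq\emptyset$ already holds in $\mathcal{S}$, so $\rho\in\mathcal{N}\subseteq\mathcal{N}'$. If $n+1\in\rho$, write $\sigma=\rho\setminus\{n+1\}$; nonemptiness of $\bigcap_{i\in\rho}S_i=S_\sigma\cap S_{n+1}$ forces $S_\sigma\neq\emptyset$, hence $\sigma\in\mathcal{N}$ by Downward Closure, so the loop in Algorithm~\ref{alg:incremental} visits $\sigma$, finds $S_\sigma\cap S_{n+1}\neq\emptyset$, and inserts $\rho=\sigma\cup\{n+1\}$ into $\mathcal{N}_{\mathrm{ext}}$ (when $\sigma=\emptyset$, $\rho=\{n+1\}$ is supplied directly by the final union). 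Thus $\rho\in\mathcal{N}'$ in every case.

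The main obstacle I anticipate is not a deep one but a matter of careful bookkeeping: ensuring the case $\sigma=\emptyset$ is handled coherently, since the empty intersection $\bigcap_{i\in\emptyset}S_i$ conventionally equals $X$ (nonempty) and the singleton $\{n+1\}$ must be accounted for separately from the loop output. I would verify that the empty simplex lies in $\mathcal{N}$ so that no consistent simplex is silently dropped, and I would state explicitly the standing assumption that the relevant interpreted sets are nonempty, so that the singleton $\{n+1\}$ genuinely belongs to the nerve. Everything else reduces to the elementary observation that $S_\sigma\cap S_{n+1}=\bigcap_{i\in\sigma\cup\{n+1\}}S_i$ and that restricting attention to $i\le n$ leaves earlier intersections unchanged.
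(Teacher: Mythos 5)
Your proposal is correct and takes essentially the same approach as the paper's proof: both partition the candidate simplices $\sigma' \subseteq \{1,\dots,n+1\}$ by whether they contain $n+1$ and use the identity $\bigcap_{i \in \sigma \cup \{n+1\}} S_i = S_\sigma \cap S_{n+1}$ to match the algorithm's output against the new nerve. Your write-up is merely more explicit in its bookkeeping (the double-inclusion organization, the $\sigma = \emptyset$ convention, and flagging the assumption $S_{n+1} \neq \emptyset$, which the paper likewise invokes).
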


\begin{proof}
Consider any $\sigma' \subseteq \{1, \ldots, n+1\}$. If $n+1 \notin \sigma'$, then $\sigma' \subseteq \{1, \ldots, n\}$ and $\sigma' \in \mathcal{N}$ if and only if $\bigcap_{i \in \sigma'} S_i \neq \emptyset$. Thus, all such subsets are included in $\mathcal{N}'$.

If $n+1 \in \sigma'$, let $\sigma = \sigma' \setminus \{n+1\}$. By the algorithm, $\sigma \in \mathcal{N}$ and $S_{\sigma} \cap S_{n+1} \neq \emptyset$ if and only if $\bigcap_{i \in \sigma'} S_i \neq \emptyset$, so $\sigma'$ is included if and only if it is a simplex in the new nerve.

Singleton $\{n+1\}$ is always added, since $S_{n+1} \neq \emptyset$ by assumption.
\end{proof}

\subsection{Soundness and Completeness of Consistency Queries}

Given the updated nerve, checking whether any subset of utterances $\{U_i\}_{i \in A}$ is consistent reduces to checking if $A \in \mathcal{N}$.
Entailment reduces to set inclusion: $\Phi \models \psi$ if and only if $\bigcap_{\phi \in \Phi} \nu(\phi) \subseteq \nu(\psi)$.

\subsection{Complexity Analysis}

Each incremental update examines all existing simplices, so the time per update is $O(|\mathcal{N}| \cdot |X|)$, where $|X|$ is the size of the semantic space and $|\mathcal{N}|$ the current number of simplices. In the worst case, $|\mathcal{N}|$ can grow exponentially with the number of utterances, but in many practical cases (when the nerve is sparse), updates are efficient.

\subsection{Summary}

This incremental approach enables efficient, modular maintenance of dialogue consistency and entailment as new utterances are introduced. The algorithm is readily implemented in languages with bitset and set primitives, such as the Wolfram Language, as we illustrate below.

\section{Worked Example and Wolfram Implementation}\label{sec6}

\subsection{Example Dialogue and Semantic Space}

Let $X = \{1, 2, 3\}$ be the semantic space. Consider the propositional atoms $p$ and $q$, and define three utterances:
\begin{itemize}
    \item $U_1 = p$,\quad interpreted as $S_1 = \{1, 2\}$
    \item $U_2 = q$,\quad interpreted as $S_2 = \{2, 3\}$
    \item $U_3 = \neg p$,\quad interpreted as $S_3 = \{3\}$
\end{itemize}

The dialogue sequence is $U = (U_1, U_2, U_3)$, with corresponding family of sets $\mathcal{S} = \{S_1, S_2, S_3\}$.

\subsection{Stepwise Nerve Construction}

\paragraph{Step 1.}
After the first utterance:
\[
    \mathcal{S}_1 = \{S_1\} \qquad \mathcal{N}_1 = \{\emptyset,\, \{1\}\}
\]

\paragraph{Step 2.}
After the second utterance:
\[
    \mathcal{S}_2 = \{S_1, S_2\}
\]
\[
    S_1 \cap S_2 = \{2\} \neq \emptyset
\]
\[
    \mathcal{N}_2 = \{\emptyset,\, \{1\},\, \{2\},\, \{1,2\}\}
\]

\paragraph{Step 3.}
After the third utterance:
\begin{align*}
    S_1 \cap S_3 &= \{1,2\} \cap \{3\} = \emptyset \\
    S_2 \cap S_3 &= \{2,3\} \cap \{3\} = \{3\} \\
    S_1 \cap S_2 \cap S_3 &= \{1,2\} \cap \{2,3\} \cap \{3\} = \emptyset
\end{align*}
Therefore,
\[
    \mathcal{N}_3 = \left\{ \emptyset,\, \{1\},\, \{2\},\, \{3\},\, \{1,2\},\, \{2,3\} \right\}
\]

\paragraph{Negative nerve:}
For example, $\{1,3\}$ and $\{1,2,3\}$ are in $\mathcal{N}_3^{-}$, as their corresponding intersections are empty.

\subsection{Wolfram Language Implementation}

\begin{verbatim}
(* Define the semantic space and utterance sets *)
X = {1, 2, 3};
S1 = {1, 2};
S2 = {2, 3};
S3 = {3};
SList = {S1, S2, S3};

(* Function to compute nerve for a list of sets *)
DialogueNerve[sets_] := Module[{n = Length[sets], idx, allSubsets, simplexQ},
  idx = Range[n];
  allSubsets = Subsets[idx];
  simplexQ[sigma_] := (sigma === {} || 
    Intersection @@ (sets[[#]] & /@ sigma) =!= {});
  Select[allSubsets, simplexQ]
]

(* Stepwise incremental build *)
nerve1 = DialogueNerve[{S1}]
nerve2 = DialogueNerve[{S1, S2}]
nerve3 = DialogueNerve[{S1, S2, S3}]
\end{verbatim}

The outputs are:
\begin{itemize}
    \item \texttt{nerve1 = \{\{\}, \{1\}\}}
    \item \texttt{nerve2 = \{\{\}, \{1\}, \{2\}, \{1,2\}\}}
    \item \texttt{nerve3 = \{\{\}, \{1\}, \{2\}, \{3\}, \{1,2\}, \{2,3\}\}}
\end{itemize}

\subsection{Consistency and Entailment Queries}

\paragraph{Consistency.}
For example, the set $\{1,3\}$ is not in $\mathcal{N}_3$; hence, utterances $U_1$ and $U_3$ are inconsistent.

\paragraph{Entailment.}
Suppose $U_1$ and $U_2$ are both asserted. Their intersection is $\{2\}$. Any formula true at world $2$ (e.g., $p \land q$) is entailed by these two utterances.

\subsection{Remarks on Performance}

For $n$ utterances, the number of possible simplices is at most $2^n$, but is often much smaller if many intersections are empty. The incremental algorithm updates the nerve in time proportional to the number of existing simplices times the size of $X$. In practice, for small $n$ and $|X|$, this is highly efficient in Wolfram Language or any bitset-capable system.

\section{Semantic Ranking and Probabilistic Extensions}\label{sec7}

\subsection{Probability Measures on the Semantic Space}

To enrich the framework with quantitative information, assign a probability measure $\mu : X \to [0,1]$ such that $\sum_{x \in X} \mu(x) = 1$. For any formula $\phi$, the probability that it holds is $\mu(\nu(\phi)) := \sum_{x \in \nu(\phi)} \mu(x)$.

\subsection{Ranking of Consequences by Improbability}

Given a set of asserted utterances $A \subseteq \{1, \ldots, n\}$, let $W_A = \bigcap_{i \in A} S_i$ be the set of worlds consistent with those utterances. Any formula $\psi$ such that $W_A \subseteq \nu(\psi)$ is a logical consequence of $A$. We can rank these consequences by their \emph{improbability}, using $-\log \mu(\nu(\psi))$ as a ranking function: consequences covering fewer worlds or worlds with lower probability are considered more ``informative'' or ``surprising.''

\subsection{Worked Example}

Suppose $X = \{1,2,3\}$ as before, and let
\[
    \mu(1) = 0.2, \quad \mu(2) = 0.5, \quad \mu(3) = 0.3
\]
Consider $A = \{1,2\}$ (utterances $U_1$ and $U_2$ are asserted), so
\[
    W_A = S_1 \cap S_2 = \{2\}
\]
Any formula $\psi$ with $\nu(\psi) \supseteq \{2\}$ is a consequence. For instance:
\begin{itemize}
    \item $\psi_1$: $p \land q$, interpreted as $\nu(\psi_1) = \{2\}$
    \item $\psi_2$: $q$, interpreted as $\nu(\psi_2) = \{2,3\}$
    \item $\psi_3$: $p \lor q$, interpreted as $\nu(\psi_3) = \{1,2,3\}$
\end{itemize}
Their probabilities are:
\begin{align*}
    \mu(\nu(\psi_1)) &= \mu(2) = 0.5 \\
    \mu(\nu(\psi_2)) &= \mu(2) + \mu(3) = 0.8 \\
    \mu(\nu(\psi_3)) &= 1.0
\end{align*}
Their ``improbabilities'' (information content) are:
\begin{align*}
    -\log \mu(\nu(\psi_1)) &\approx 0.693 \\
    -\log \mu(\nu(\psi_2)) &\approx 0.223 \\
    -\log \mu(\nu(\psi_3)) &= 0
\end{align*}
Thus, $p \land q$ is the most ``informative'' or ``surprising'' consequence, and $p \lor q$ is the least.

\subsection{Wolfram Language Implementation}

\begin{verbatim}
(* Probability measure on X *)
mu = <|1 -> 0.2, 2 -> 0.5, 3 -> 0.3|>;

(* Function for probability of a set *)
SetProbability[S_] := Total[mu /@ S]

(* Given the intersection W_A *)
WA = {2};

(* Some example formulas and their sets *)
psi1 = {2};       (* p && q *)
psi2 = {2, 3};     (* q *)
psi3 = {1, 2, 3};  (* p || q *)

Table[
  {psi, SetProbability[psi], -Log[SetProbability[psi]]},
  {psi, {psi1, psi2, psi3}}
]
\end{verbatim}

\subsection{Discussion}

This ranking enables the agent to select consequences that are maximally informative, or to prioritize which entailments to communicate or investigate. In more complex settings, different probability measures can encode contextual beliefs or background knowledge. All calculations and updates remain efficient so long as $X$ is finite and not too large.

\subsection{Outlook}

While this ranking is straightforward in the finite, discrete case, richer probabilistic or information-theoretic generalizations (e.g., Bayesian update, learning from experience, or continuous $X$) remain open for future work. This framework, however, provides a rigorous and extensible base for such developments.

\section{Extensions and Limitations}\label{sec8}

The present work has, by design, adopted the finite, discrete semantic space as the arena for topological dialogue semantics. This choice, motivated by the twin goals of clarity and tractability, does not preclude more sophisticated developments, but rather serves as a foundation from which a variety of deeper questions emerge.

First, one naturally wonders whether the algorithmic apparatus described here can be meaningfully extended to richer logical languages. Modal and temporal logics, for example, have a well-established tradition of topological semantics, where modal operators are interpreted via additional structure on the semantic space \citep{van2010modal}. Incorporating such operators into our incremental framework would require significant adjustments, effectively blending techniques from modal model checking and dynamic epistemic logic.

On a deeper theoretical level, the dialogue nerve and its negative counterpart are instances of a more general pattern: the representation of informational or epistemic structure via combinatorial or categorical means. In categorical logic and topos theory, for instance, one replaces the set of worlds $X$ by a more general object, and views formulas as subobjects or sheaves \citep{goldblatt2006topoi}. Within this perspective, dialogue utterances may be seen as generating subtopoi. While such approaches promise a unification of syntactic, semantic, and dynamic aspects of reasoning, they introduce considerable abstraction, and often exceed the practical demands of tractable, incremental computation.

It is also important to be transparent about the limitations inherent in this approach. The dialogue nerve can grow exponentially in size, placing a natural bound on practical deployment. Furthermore, as discussed earlier, certain desirable combinatorial properties, such as the so-called flag property, fail in the discrete setting. Pairwise consistency among utterances does not in general guarantee collective consistency, unless additional geometric or convexity constraints—such as those formalized in Helly's theorem and its relatives—are imposed on the semantic space \citep{danzer1963helly, matousek2008using, hatcher2002algebraic}.

The probabilistic extension provided here is robust and useful in finite cases. However, generalizing these methods to infinite or continuous semantic spaces would require a more substantial measure-theoretic infrastructure, and may reintroduce undecidability or intractability unless further restrictions are imposed.

Despite these limitations, the conceptual and technical advances made in this work establish a versatile and extensible platform for further research.

\section{Conclusion}\label{sec9}

This paper has articulated and justified a self-contained framework for topological dialogue semantics over finite, discrete spaces. By combining the clarity of combinatorial constructions with the rigour of formal logic, we have shown how utterances can be systematically interpreted as open sets, and how their mutual consistency or inconsistency can be encoded and efficiently updated as a simplicial complex—the dialogue nerve. The correctness of the incremental algorithm, which forms the core of the practical implementation, has been proved in detail, and its efficiency in typical cases has been argued both theoretically and by concrete example. A companion philosophical article, "From Geometry to Meaning: A Constructivist Semantics for Dialogue via Nerve Structures," is currently under review at the Journal of Philosophical Logic, providing epistemic and conceptual context for the present technical framework \citep{ballus2025geometry}.

The introduction of probabilistic ranking into this context provides an additional layer of informational analysis: not only can one determine which consequences are entailed by a given dialogue, but also which are most informative or surprising, given a background distribution over semantic worlds. This enriches the model’s descriptive power and invites applications to dialogue systems that require both logical and quantitative sensitivity.

Looking ahead, the natural path for further development leads towards richer logics—modal, temporal, or higher-order—and towards a deeper connection with categorical and topos-theoretic semantics. Each direction presents both mathematical and computational challenges, but the foundations provided here offer a stable platform from which such explorations can proceed.

\backmatter

\section*{Declarations}

\begin{itemize}
\item Funding: Not applicable
\item Conflict of interest/Competing interests: Not applicable
\item Ethics approval and consent to participate: Not applicable
\item Consent for publication: Not applicable
\item Data availability: Not applicable
\item Materials availability: Not applicable
\item Code availability: The full reference implementation in the Wolfram Language is available from the corresponding author upon reasonable request.
\item Author contribution: A. Ballús conceived the study, developed the framework, wrote the code, and wrote the manuscript.
\end{itemize}

\begin{appendices}

\section{Proofs and Reference Implementation}\label{secA1}

\subsection{Proofs of Core Results}

\begin{lemma}[Downward Closure of the Dialogue Nerve]
Let $\mathcal{N}(\mathcal{S})$ be the nerve of a family $\mathcal{S} = \{S_1, \ldots, S_n\}$. If $\sigma \in \mathcal{N}(\mathcal{S})$ and $\tau \subseteq \sigma$, then $\tau \in \mathcal{N}(\mathcal{S})$.
\end{lemma}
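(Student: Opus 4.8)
The plan is to prove this by a direct monotonicity argument on intersections, exactly mirroring the reasoning already used for the identically-stated lemma in Section~\ref{sec4}. The statement asserts that the dialogue nerve is downward closed, so it suffices to show that shrinking the index set can only enlarge (or preserve) the corresponding intersection, hence nonemptiness is inherited.

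First I would fix $\sigma \in \mathcal{N}(\mathcal{S})$ and an arbitrary subset $\tau \subseteq \sigma$, and unpack the membership condition from the definition of the dialogue nerve: $\sigma \in \mathcal{N}(\mathcal{S})$ means precisely that $\bigcap_{i \in \sigma} S_i \neq \emptyset$. Next I would observe the fundamental set-theoretic fact that intersecting over a larger index set yields a smaller set: since $\tau \subseteq \sigma$, every constraint imposed by $\tau$ is already imposed by $\sigma$, so
\[
    \bigcap_{i \in \sigma} S_i \;\subseteq\; \bigcap_{i \in \tau} S_i.
\]
Then I would conclude by chaining these facts: the left-hand side is nonempty by hypothesis, and a set containing a nonempty set is itself nonempty, so $\bigcap_{i \in \tau} S_i \neq \emptyset$, which is exactly the condition for $\tau \in \mathcal{N}(\mathcal{S})$.

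There is essentially no obstacle here; the result is an immediate structural consequence of the definition. The only point requiring mild care is the boundary case $\tau = \emptyset$: the empty intersection is conventionally taken to be the ambient space $X$ (or at least a nonempty set by the standing assumption that $X \neq \emptyset$), so $\emptyset \in \mathcal{N}(\mathcal{S})$ holds and the argument covers it uniformly. If one prefers to avoid the empty-intersection convention entirely, the case $\tau = \emptyset$ can simply be noted separately as trivially a simplex. Since this lemma is verbatim the one proved earlier, I would keep the appendix proof identical in substance to the main-text proof, merely restating the monotonicity step explicitly for completeness.
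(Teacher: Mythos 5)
Your proposal is correct and follows essentially the same argument as the paper's proof: unpack the definition, use the monotonicity fact $\bigcap_{i \in \sigma} S_i \subseteq \bigcap_{i \in \tau} S_i$ for $\tau \subseteq \sigma$, and conclude nonemptiness is inherited. Your explicit handling of the $\tau = \emptyset$ case (via the empty-intersection convention and $X \neq \emptyset$) is a minor addition the paper leaves implicit, but it changes nothing substantive.
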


\begin{proof}
Suppose $\sigma \in \mathcal{N}(\mathcal{S})$, so $\bigcap_{i \in \sigma} S_i \neq \emptyset$. For any $\tau \subseteq \sigma$, we have $\bigcap_{i \in \sigma} S_i \subseteq \bigcap_{i \in \tau} S_i$, so $\bigcap_{i \in \tau} S_i \neq \emptyset$ as well. Thus $\tau \in \mathcal{N}(\mathcal{S})$.
\end{proof}

\begin{theorem}[Correctness of Incremental Nerve Update Algorithm]
Let $\mathcal{N}$ be the nerve for $\mathcal{S} = \{S_1, \ldots, S_n\}$. Let $\mathcal{N}'$ be the output of the incremental algorithm when $S_{n+1}$ is added. Then $\mathcal{N}' = \mathcal{N}(\mathcal{S} \cup \{S_{n+1}\})$.
\end{theorem}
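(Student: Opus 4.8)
The plan is to establish the set equality $\mathcal{N}' = \mathcal{N}(\mathcal{S} \cup \{S_{n+1}\})$ by characterizing, for an arbitrary index set $\sigma' \subseteq \{1,\ldots,n+1\}$, exactly when $\sigma'$ belongs to each side, and then verifying that the two conditions coincide. The organizing observation is an elementary factorization: whenever $n+1 \in \sigma'$, writing $\sigma = \sigma' \setminus \{n+1\}$ gives
\[
\bigcap_{i \in \sigma'} S_i = \Bigl(\bigcap_{i \in \sigma} S_i\Bigr) \cap S_{n+1} = S_\sigma \cap S_{n+1},
\]
so that the consistency test for any simplex involving the new utterance factors through the previously computed intersection $S_\sigma$. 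This reduces the entire argument to a case split on whether $n+1 \in \sigma'$.

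First I would treat the case $n+1 \notin \sigma'$. Here $\sigma' \subseteq \{1,\ldots,n\}$, and its intersection is literally unchanged by the addition of $S_{n+1}$, so $\sigma' \in \mathcal{N}(\mathcal{S} \cup \{S_{n+1}\})$ iff $\bigcap_{i\in\sigma'} S_i \neq \emptyset$ iff $\sigma' \in \mathcal{N}$; since the algorithm copies all of $\mathcal{N}$ into $\mathcal{N}'$ (line 9), such $\sigma'$ are handled exactly. For the case $n+1 \in \sigma'$, the factorization above shows $\sigma' \in \mathcal{N}(\mathcal{S} \cup \{S_{n+1}\})$ iff $S_\sigma \cap S_{n+1} \neq \emptyset$, which is precisely the test the loop performs before inserting $\sigma' = \sigma \cup \{n+1\}$ into $\mathcal{N}_{\mathrm{ext}}$. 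The singleton $\{n+1\}$ is the instance $\sigma = \emptyset$: under the standard convention $S_\emptyset = X \neq \emptyset$ we have $S_\emptyset \cap S_{n+1} = S_{n+1} \neq \emptyset$, so it is a genuine simplex, and the explicit insertion on line 9 records it (indeed redundantly, once one notes $\emptyset \in \mathcal{N}$).

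The step I expect to be the main obstacle is the completeness direction for simplices containing $n+1$ --- namely, showing the loop does not \emph{miss} any such simplex. The loop iterates only over $\sigma$ already present in $\mathcal{N}$, so a priori it could overlook some valid $\sigma'$ whose face $\sigma = \sigma' \setminus \{n+1\}$ was absent from the old nerve. The resolution is exactly the Downward Closure Lemma established above: if $\sigma' \in \mathcal{N}(\mathcal{S} \cup \{S_{n+1}\})$, then every subset of $\sigma'$ lies in the nerve too, and in particular the face $\sigma = \sigma' \setminus \{n+1\} \subseteq \{1,\ldots,n\}$ has nonempty intersection, whence $\sigma \in \mathcal{N}$. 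Thus every new simplex through $n+1$ genuinely arises as an extension of an existing simplex, and the loop reaches it. Soundness in this case is immediate, since the loop's test $S_\sigma \cap S_{n+1} \neq \emptyset$ is precisely the defining condition, so no spurious simplices can be introduced. Combining the two cases yields $\mathcal{N}' = \mathcal{N}(\mathcal{S} \cup \{S_{n+1}\})$.
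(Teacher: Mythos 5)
Your proof is correct and follows essentially the same route as the paper's: a case split on whether $n+1 \in \sigma'$, the factorization $\bigcap_{i \in \sigma'} S_i = S_\sigma \cap S_{n+1}$, and handling of the singleton $\{n+1\}$. The only difference is one of explicitness — you spell out that the completeness direction (the loop cannot miss a valid simplex through $n+1$) rests on the Downward Closure Lemma, a step the paper compresses into its ``nonempty if and only if $\sigma \in \mathcal{N}$ and $S_{n+1}$ intersects this set'' claim — which is a welcome clarification rather than a different argument.
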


\begin{proof}
Let $\sigma' \subseteq \{1, \ldots, n+1\}$. If $n+1 \notin \sigma'$, then $\sigma' \subseteq \{1, \ldots, n\}$ and $\sigma' \in \mathcal{N}$ if and only if $\bigcap_{i \in \sigma'} S_i \neq \emptyset$. If $n+1 \in \sigma'$, write $\sigma' = \sigma \cup \{n+1\}$ with $\sigma \subseteq \{1, \ldots, n\}$. Then $\bigcap_{i \in \sigma'} S_i = (\bigcap_{i \in \sigma} S_i) \cap S_{n+1}$, which is nonempty if and only if $\sigma \in \mathcal{N}$ and $S_{n+1}$ intersects this set. The algorithm adds exactly those extensions and no others. Finally, $\{n+1\}$ is included whenever $S_{n+1} \neq \emptyset$, as required.
\end{proof}

\subsection{Wolfram Language Reference Implementation}

Below is the reference implementation for nerve construction and semantic ranking, as described in the main text. The code is written for transparency and pedagogical clarity, and can be adapted to larger or more complex semantic spaces as needed.

\begin{verbatim}
(* Semantic space and utterance sets *)
X = {1, 2, 3};
S1 = {1, 2};
S2 = {2, 3};
S3 = {3};
SList = {S1, S2, S3};

(* Dialogue nerve construction *)
DialogueNerve[sets_] := Module[{n = Length[sets], idx, allSubsets, simplexQ},
  idx = Range[n];
  allSubsets = Subsets[idx];
  simplexQ[sigma_] := (sigma === {} || 
    Intersection @@ (sets[[#]] & /@ sigma) =!= {});
  Select[allSubsets, simplexQ]
]

(* Incremental nerve update: adds a new set to an existing nerve *)
IncrementalUpdate[sets_, newSet_] := Module[{n, oldNerve, newIdx, ext, updated},
  n = Length[sets];
  oldNerve = DialogueNerve[sets];
  newIdx = n + 1;
  ext = Select[oldNerve, 
    (Intersection @@ (sets[[#]] & /@ #) \[Intersection] newSet =!= {}) &];
  updated = Union[oldNerve, Map[Append[#, newIdx] &, ext], {{newIdx}}];
  updated
]

(* Example probability measure *)
mu = <|1 -> 0.2, 2 -> 0.5, 3 -> 0.3|>;
SetProbability[S_] := Total[mu /@ S];

(* Semantic ranking for consequences *)
WA = {2};  (* Intersection of asserted utterances *)
psi1 = {2};       (* p && q *)
psi2 = {2, 3};     (* q *)
psi3 = {1, 2, 3};  (* p || q *)

Table[
  {psi, SetProbability[psi], -Log[SetProbability[psi]]},
  {psi, {psi1, psi2, psi3}}
]
\end{verbatim}

\subsection{Additional Worked Example: Negative Nerve}

To further clarify the construction of the negative nerve, consider the sets $S_1 = \{1,2\}$, $S_2 = \{2,3\}$, $S_3 = \{3\}$ in $X = \{1,2,3\}$. The negative nerve $\mathcal{N}^{-}$ consists of those subsets whose intersections are empty. For example, $\{1,3\}$ corresponds to $S_1 \cap S_3 = \{1,2\} \cap \{3\} = \emptyset$, so $\{1,3\} \in \mathcal{N}^{-}$. Similarly, $\{1,2,3\}$ is in $\mathcal{N}^{-}$ because the triple intersection is also empty. This explicitly demonstrates the structure of inconsistent utterance sets.

\end{appendices}

\bibliography{sn-bibliography}

\end{document}